\newtheorem{theorem}{Theorem}[section]
\newtheorem{proposition}[theorem]{Proposition}
\newtheorem{corollary}[theorem]{Corollary}
\theoremstyle{definition}
\newtheorem{remark}[theorem]{Remark}
\newtheorem{example}[theorem]{Example}
\begin{document}

\date{2018-4-10}
\title{Fermionic quantum detailed balance and entanglement}
\author{Rocco Duvenhage\\Department of Physics, University of Pretoria, Pretoria 0002 \\South Africa \\rocco.duvenhage@up.ac.za}
\maketitle

\begin{abstract}
A definition of detailed balance tailored to a system of indistinguishable
fermions is suggested and studied using an entangled fermionic state. This is
done in analogy to a known characterization of standard quantum detailed
balance with respect to a reversing operation.

\end{abstract}



\section{Introduction}

In this paper we take the first steps to formulate quantum detailed balance
tailored for a system consisting of indistinguishable fermions, by using an entangled fermionic state.

Classically, detailed balance of a Markov chain and a given probability
distribution means that the probability for the system to make a transition
from one pure state to another is equal to the probability for the opposite
transition. More precisely, $p_{j}\gamma _{jk}=p_{k}\gamma _{kj}$, where the  
$p_{j}$ form the probability distribution over the pure states, 
and the $\gamma _{jk}$ are the transition probabilities. 
In the quantum case, however, one can express detailed balance in
terms of an entangled state of two copies of the system in question, without
direct reference to transition probabilities. See \cite{DS, FR10, FR}. Also
see the closely related papers \cite{BQ, BQ2}. This formulation creates the
possibility to generalize or adapt detailed balance in natural ways which are
not apparent from the classical formulation. See for example \cite{DS2}.

The references mentioned above build on previous work on quantum detailed
balance, where connections with entanglement were, however, not made explicit,
in particular \cite{Ag, Al, CWA, KFGV, M, M83}.

Entanglement of indistinguishable particles is not considered in the 
references above. Since many relevant systems consist of
indistinguishable particles, it is natural to ask whether the formulation of
detailed balance in terms of an entangled state can be successfully adapted to
entangled states of indistinguishable particles.

In this paper we show that this is in fact possible, at least for simple fermionic
systems. We thus obtain a purely quantum mechanical formulation of detailed balance for a system of indistinguishable fermions. 

We focus on obtaining a fermionic version of the so-called 
standard quantum detailed balance condition with respect to a reversing 
operation \cite{FR, FU}. A standard choice of reversing operation will 
be used in the latter, namely transposition of matrices. Our fermionic 
detailed balance condition is then modelled on this situation, and will 
be called \emph{fermionic standard quantum detailed balance}.

We take our fermionic system to be a finite sublattice in a fermion 
lattice. We then consider a second copy of this system elsewhere in the lattice, 
and set up an entangled state for these two systems. In the entangled state 
formulation of detailed balance for the generic case without reference to 
indistinguishable particles, the tensor product of two copies of the system 
being studied, plays a central role. In the fermionic case, however, 
this simple tensor product structure is lost, but its place is filled by 
the lattice structure.

Entangled states for indistinguishable particles are discussed in some detail (both conceptually and technically) in for example \cite[Section 5-4]{P}, \cite{ESBL} and \cite{GMW}. We only need a very particular case however, for which we give a self-contained, although brief, discussion.

Duality of dynamical maps play an important role in quantum detailed balance 
(see for example \cite{FR}). A preliminary investigation into an
analogous duality in the fermionic case is presented in this paper.
However, the analogy with duality in the usual case is only partial. 
In particular, positivity properties in the usual case are not 
present in the fermionic case, possibly indicating that different forms of 
duality should be explored further. 

Tailoring quantum detailed balance to fermionic systems is in line with the
general theme of extending or adapting to fermionic systems, various concepts
and results from either classical probability, or from quantum probabilistic
theories that do not take fermionic behaviour into account. Examples of this
avenue of research can be found in \cite{AFM, AM3, AM, AM2, CF, Fid2, KZE}.

We review the essentials of the fermion lattice in Section
\ref{AfdFermi}. In Section \ref{AfdProdDiag} we treat the fermionic 
entangled state which we need. 
Then, in Section \ref{AfdFb}, we turn to the
formulation of fermionic standard quantum detailed balance in terms of this 
entangled state. The usual case (without reference to indistinguishable particles) of standard quantum detailed balance with respect to a reversing operation is also briefly discussed as part of that section, in order to clarify the origins of the fermionic formulation. 
Section \ref{AfdVb} proceeds with a simple example to illustrate fermionic standard quantum detailed balance. 
A related example is discussed in Section \ref{AfdTeenVb} to show that 
fermionic standard quantum detailed balance indeed differs from the usual 
standard quantum detailed balance condition applied to a fermionic system. 
Duality of dynamical maps is explored in Section \ref{AfdDuaal}. 
The paper concludes in Section \ref{Vrae} with some questions for
possible further research.

\section{The fermion lattice}\label{AfdFermi}


Here we briefly review the framework that we will use, and also set up much of
the notation for the rest of the paper. Our main references are \cite[Section
5.2]{BR2} and \cite{AM}.

Let $h$ denote the Hilbert space for a single fermion and consider the Fock
space
\[
F(h)=\bigoplus_{n\geq0}h^{n}
\]
where
\[
h^{0}:=\mathbb{C}\Psi
\]
with $\Psi$ the vacuum vector, which we can simply take to be $\Psi
=1\in\mathbb{C}$, while for $n=1,2,3,...$
\[
h^{n}:=h\otimes...\otimes h
\]
where $h$ appears $n$ times.

Consider the projection
\[
P:F(h)\rightarrow F(h)
\]
defined by
\[
P\Psi=\Psi
\]
and
\begin{equation}
P(x_{1}\otimes...\otimes x_{n})=\frac{1}{n!}\sum_{\pi\in S_{n}}\varepsilon
_{\pi}x_{\pi(1)}\otimes...\otimes x_{\pi(n)} \label{P}
\end{equation}
for $x_{1},...,x_{n}\in h$, where $n=1,2,3,...$. Here $\varepsilon_{\pi}$
denotes the sign of the permutation $\pi$. We then define the Fermi-Fock space
as
\[
H=PF(h)
\]
and denote its inner product, inherited from $F(h)$, by 
$\langle\cdot,\cdot\rangle$.

We can define creation operators $a^{\ast}(x)$ on $H$ for all $x\in h$ as
follows: First define a creation operator $b^{\ast}(x)$ on a dense subspace of
$F(h)$ by
\[
b^{\ast}(x)\Psi=x
\]
and
\[
b^{\ast}(x)(x_{1}\otimes...\otimes x_{n})=(n+1)^{1/2}x\otimes x_{1}
\otimes...\otimes x_{n}
\]
for all $x,x_{1},...,x_{n}\in h$. The corresponding annihilation operator
$b(x)$, is given by
\[
b(x)\Psi=0
\]
and
\[
b(x)(x_{0}\otimes x_{1}\otimes...\otimes x_{n})=(n+1)^{1/2}\left\langle
x,x_{0}\right\rangle x_{1}\otimes...\otimes x_{n}
\]
for all $x,x_{0},x_{1},...,x_{n}\in h$, where the inner product $\left\langle
\cdot,\cdot\right\rangle $ of $h$ is taken to be linear in the second slot.
The fermionic creation and annihilation operators are then defined as
\[
a^{\ast}(x)=Pb^{\ast}(x)|_{H}
\]
and
\[
a(x)=Pb(x)|_{H}
\]
respectively, for all $x\in h$. These are bounded operators on $H$ which are
adjoints of one another, and satisfy the anti-commutation relations
\[
\{a(x),a(y)\}=0
\]
and
\[
\{a(x),a^{\ast}(y)\}=\left\langle x,y\right\rangle 1_{H}
\]
for all $x,y\in h$, where $1_{H}$ is the identity operator on $H$.

Next we introduce the lattice $L$. Let $L$ be any countable or finite set. We
assume that $L$ indexes an orthonormal basis for $h$, namely
\[
e_{l}
\]
for $l\in L$. We could, if necessary, rather denote the orthonormal basis as
$e_{l,s}$ where $s\in S$ specifies some further properties beyond the lattice
we are considering, say spin values. For simplicity of notation, however, we
subsume all such properties into the set $L$.

Also then write
\begin{equation}
f_{\varnothing}:=\Psi\label{fleeg}
\end{equation}
and
\begin{equation}
f_{(l_{1},...,l_{n})}:=Pe_{l_{1}}\otimes...\otimes e_{l_{n}} \label{f}
\end{equation}
for all $l_{1},...,l_{n}\in L$, for any $n=1,2,3,...$.

Given any subset $I$ of $L$, denote by
\[
D_{I}
\]
a set of finite sequences $(l_{1},...,l_{n})$ in $L$, for $n=0,1,2,3,...$,
with $l_{j}\neq l_{k}$ when $j\neq k$, such that each finite subset of $I$
corresponds to exactly one element of $D_{I}$. The empty subset of $I$
corresponds to the empty sequence denoted by $\varnothing\in D_{I}$, which is
the case $n=0$. Note that the vectors
\[
f_{M}
\]
with $M\in D_{L}$ form an orthonormal basis for $H$. The set $D_{I}$ is not
uniquely specified, but that does not matter, as it is just a way to label a
set of orthonormal vectors up to a factor $\pm1$ for each vector, irrespective
of how $D_{I}$ is chosen for a given $I$.

We also use the notation
\[
a_{l}^{\ast}=a^{\ast}(e_{l})
\]
and
\[
a_{l}=a(e_{l})
\]
for all $l\in L$. We note that
\[
a_{l}^{\ast}f_{(l_{1},...,l_{n})}=f_{(l,l_{1},...,l_{n})}
\]
(which is $0$ if $l\in\{l_{1},...,l_{n}\}$) and
\[
a_{l}f_{(l_{1},...,l_{n})}=(-1)^{k-1}f_{(l_{1},...,\hat{l}_{k},...,l_{n})}
\]
for $k$ such that $l_{k}=l$, where $(l_{1},...,\hat{l}_{k},...,l_{n})$ refers
to the sequence $(l_{1},...,l_{n})$ with $l_{k}$ removed, while $a_{l}
f_{(l_{1},...,l_{n})}=0$ if $l\notin\{l_{1},...,l_{n}\}$. These facts are
useful to keep in mind when manipulating expressions involving $f_{(l_{1}
,...,l_{n})}$.

For any subset $I$ of $L$, let
\[
A(I)
\]
denote the C*-subalgebra of $B(H)$ generated by $\{a_{l}:l\in I\}$, where
$B(H)$ is the C*-algebra of all bounded operators on $H$. Of course, since
$a_{l}^{\ast}$ is the adjoint of $a_{l}$, we have $a_{l}^{\ast}\in A(I)$ for
all $l\in L$. Because $\{a_{l},a_{l}^{\ast}\}=1_{H}$, it follows that $A(I)$
contains the unit $1_{H}$ of $B(H)$.

From the next section onwards we are going to focus on the case where $I$ is
finite, and $A(I)$ therefore finite dimensional because of the
anti-commutation relations. In this case we can simply view $A(I)$ as the
algebra generated by operators $a_{l}$ and $a_{l}^{\ast}$ for $l\in I$, and
C*-algebraic notions become less important. However, we nevertheless continue
with the usual C*-algebraic notation of denoting the adjoint of an operator
$a$ by $a^{\ast}$. Note that we need not assume that $L$ is finite.

\section{A fermionic entangled state}\label{AfdProdDiag}


Here our main goal is to construct the entangled state which is to play a
central role in the next section where detailed balance is discussed. However,
we first construct a fermionic analogue of a product of two states, each of
which is given by a diagonal density matrix in terms of basis vectors obtained
from the creation operators. This product state will not be used in fermionic
detailed balance, but gives some insight into how states for combined
fermionic systems should be constructed, which is instructive for the
subsequent construction of the entangled state. It may  also be
relevant in constructing examples of states for the more general case of
balance instead of detailed balance (see \cite{DS2}), but that will not be
treated in this paper. Our treatment of entanglement here is self-contained,
but brief and limited to the specific entangled state that we need, presented
in a mathematical form convenient for our later work in the paper. For more
systematic investigations into entanglement for systems of identical particles,
the reader is referred to \cite{ESBL} and \cite{GMW}. 
Also see \cite{GM, L, Mor, PY, S, WV, Z, ZW} for some of the other early papers on this 
topic, as well as \cite{BF, BFM, BFM2, BFM3, BFT, MB} for a selection of more recent ones.

Consider $A(I)$ and $A(J)$, where $I$ and $J$ are disjoint finite subsets of
$L$. The finite dimensionality of the algebras $A(I)$ and $A(J)$ due to $I$
and $J$ being finite, allows us to avoid any technicalities involving limits
and infinite sums. It should be possible to handle such technicalities, but
that will not help to clarify the conceptual aspects we want to focus on.

The fermion lattice now provides a convenient framework to construct fermionic
analogues of product or entangled states for $A(I)$ and $A(J)$.

Consider two sets of probabilities, $p_{M}$ for $M\in D_I$, and $q_{N}$ 
for $N\in D_J$, i.e. $p_{M}\geq0$,
$q_{N}\geq0$, $\sum_{M\in D_{I}}p_{M}=1$ and $\sum_{N\in D_{J}}q_{N}=1$. For
$A(I)$ and $A(J)$, we respectively consider the diagonal density matrices
\begin{equation}
\rho_{I}=\sum_{M\in D_{I}}p_{M}f_{M}\Join f_{M} \label{roI}
\end{equation}
and
\[
\rho_{J}=\sum_{N\in D_{J}}q_{N}f_{N}\Join f_{N}
\]
with
\[
x\Join y\in B(H)
\]
defined as
\[
(x\Join y)z=x\left\langle y,z\right\rangle
\]
for all $x,y,z\in H$, inspired by Dirac notation $\left|  x\right\rangle
\left\langle y\right|  $. I.e., $f_{M}\Join f_{M}$ could also be written as
$\left|  f_{M}\right\rangle \left\langle f_{M}\right|  $.

We aim to define a fermionic analogue of a product state for the two states
$\rho_{I}$ and $\rho_{J}$ such that the state $\rho$ which is obtained is
itself a sensible fermionic state. We achieve this by setting
\[
\rho=\sum_{M\in D_{I}}\sum_{N\in D_{J}}p_{M}q_{N}f_{MN}\Join f_{MN}
\]
where
\[
MN
\]
denotes the concatenation of the sequences $M$ and $N$, i.e. if $M$ and $N$
are the sequences $(m_{1},...,m_{j})$ and $(n_{1},...,n_{k})$ respectively,
then $MN$ denotes the sequence $(m_{1},...,m_{j},n_{1},...,n_{k})$, while for
$M=\varnothing$ we have $MN=N$, and for $N=\varnothing$ we have $MN=M$. The
mixed state $\rho$ is a fermionic state simply because the pure states
$f_{MN}$ are. Note that $\rho$ was constructed in analogy to the usual product
state given by
\[
\sum_{M\in D_{I}}\sum_{N\in D_{J}}
p_{M}q_{N}(f_{M}\otimes f_{N})\Join(f_{M}\otimes f_{N}),
\]
which is not a fermionic state in general, since $f_{M}\otimes f_{N}$ is not.

It is worth emphasizing (see \cite{GMW} and \cite{ESBL}) that despite the form
of the vectors $f_{MN}$, which is given by (\ref{f}) and (\ref{P}), they are
not viewed as being entangled states. Therefore the state $\rho$, being a
mixture of these pure states, possesses no entanglement. Similarly $\rho_{I}$
and $\rho_{J}$ possess no entanglement.

Next we consider the fermionic entangled state of main interest to us in this
paper. We assume that $J$ above has the same number of elements as $I$, but
still with $I\cap J=\varnothing$, and let
\[
\iota:I\rightarrow J
\]
be a bijection. The role of $\iota$ is to view $J$ as a ``copy'' of $I$
elsewhere in $L$, with the goal of constructing an entangled state of two
copies of the same state. We use $\rho_{I}$ as above, but replace $\rho_{J}$
by
\[
\rho_{\iota(I)}=\sum_{M\in D_{I}}p_{M}f_{\iota(M)}\Join f_{\iota(M)}
\]
where $\iota(M):=(\iota(m_{1}),...,\iota(m_{j}))$ for $M=(m_{1},...,m_{j})$.
The fermionic entangled state of interest to us is then defined to be
\begin{equation}
\Phi=\sum_{M\in D_{I}}p_{M}^{1/2}f_{M\iota(M)}\in H \label{diag}
\end{equation}
where $M\iota(M)$ again denotes concatenation as above. This is in analogy to
the entangled state
\begin{equation}
\sum_{M\in D_{I}}p_{M}^{1/2}f_{M}\otimes f_{\iota(M)} \label{diag2}
\end{equation}
which however is not a fermionic state, i.e. it is not in $H$, as
$f_{M}\otimes f_{\iota(M)}\notin H$.

\begin{remark}
The term ``diagonal state'' could also be used for $\Phi$, partly due to only
the form $f_{M\iota(M)}$ appearing, instead of the more general case
$f_{M\iota(N)}$, but also partly because in classical probability an analogous
construction leads to a so-called diagonal measure (in which entanglement
plays no role). This classical construction has a general noncommutative
counterpart (see \cite{D} and \cite{Fid}, as well as \cite[Subsection
7.2]{DS2}) which generalizes (\ref{diag2}), but not (\ref{diag}).
\end{remark}

We note that both $\rho$ and $\Phi$ reduce to the correct states, i.e.
\[
\operatorname*{Tr}(\rho a)=\operatorname*{Tr}(\rho_{I}a)
\]
and
\[
\operatorname*{Tr}(\rho b)=\operatorname*{Tr}(\rho_{J}b)
\]
for all $a\in A(I)$ and $b\in A(J)$, so $\rho$ reduces to $\rho_{I}$ and
$\rho_{J}$, while
\[
\left\langle \Phi,a\Phi\right\rangle =\operatorname*{Tr}(\rho_{I}a)
\]
and
\[
\left\langle \Phi,b\Phi\right\rangle =\operatorname*{Tr}(\rho_{\iota(I)}b)
\]
for all $a\in A(I)$ and $b\in A(\iota(I))$. This can be verified by fairly
straightforward calculations.

Note in particular that the pure state $\Phi$ reduces to the mixed states
$\rho_{I}$ and $\rho_{\iota(I)}$ for the algebras $A(I)$ and $A(\iota(I))$
respectively, each state being a mixture of fermionic pure states, confirming
that $\Phi$ is entangled if at least two of the probabilities $p_{M}$ are not zero.

To conclude, $\Phi$ is the state that will be of central importance in the
rest of the paper.

\section{Fermionic standard quantum detailed balance}\label{AfdFb}

In this section consider a purely fermionic formulation for detailed balance,
in terms of the framework set up so far, in particular making use of the state
$\Phi$ defined in (\ref{diag}).

We start by briefly reviewing the detailed balance condition in the generic
from not specifically involving indistinguishable particles. In particular we focus on
a standard quantum detailed balance of with respect to a reversing operation,
as defined in \cite{FR} and \cite{FU}, and also studied in \cite{BQ2} and
\cite{FR2}. We only discuss it for finite dimensional systems. The development
of quantum detailed balance more generally can be retraced in \cite{Ag, Al,
CWA, KFGV, M83, M, M84, MS, AI}.

Consider a quantum system with $n$ dimensional Hilbert space and its
observable algebra representable as the algebra $M_{n}$ of $n\times n$
matrices over $\mathbb{C}$. Let the system's state be given by the density
matrix $\rho$, and we then choose to work in an orthonormal basis
$d_{1},...,d_{n}$ in which this density matrix is diagonal, say
\[
\rho=\left[
\begin{array}
[c]{ccc}
p_{1} &  & \\
& \ddots & \\
&  & p_{n}
\end{array}
\right]  .
\]
Let
\[
\tau_{t}:M_{n}\rightarrow M_{n}
\]
be a semigroup of (completely) positive unital maps giving the dynamics of the
system as a function of time $t\geq0$. One then considers the following
entangled state of two copies of $(M_{n},\rho)$:
\[
\Omega=\sum_{j=1}^{n}p_{j}^{1/2}d_{j}\otimes d_{j}
\]
which we represent as a state $\omega$ on the composite system's observable
algebra $M_{n}\otimes M_{n}$ by
\begin{equation}
\omega(a)=\left\langle \Omega,a\Omega\right\rangle \label{omega}
\end{equation}
for all $a\in M_{n}\otimes M_{n}$.

In this set-up we can express standard quantum detailed balance of the system
with respect to a reversing operation, as the condition
\begin{equation}
\omega(a\otimes\tau_{t}(b))=\omega(\tau_{t}(a)\otimes b) \label{fb}
\end{equation}
for all $a,b\in M_{n}$ and all $t\geq0$. Here we have in effect made a
standard choice of reversing operation as the transposition of matrices with
respect to the chosen basis. In this paper we refer to condition (\ref{fb})
simply as \emph{standard quantum detailed balance} of the system 
$(M_{n},\tau,\rho)$. See \cite[Section 5]{DS} for more detail on this specific
formulation of the standard quantum detailed balance condition.

With this background in hand, we can now work towards writing down a fermionic
version of condition (\ref{fb}).

Returning to our notation from Section \ref{AfdProdDiag}, we are going to work
with the algebra $A(I\cup\iota(I))$ in the place of $M_{n}\otimes M_{n}$, and
in analogy with $\omega$ we define the state $\varphi$ on $A(I\cup\iota(I))$
by
\begin{equation}
\varphi(a)=\left\langle \Phi,a\Phi\right\rangle \label{fi}
\end{equation}
for all $a\in A(I\cup\iota(I))$. Consider a semigroup $\tau$ of positive (or
completely positive) unital maps
\[
\tau_{t}:A(I)\rightarrow A(I)
\]
for $t\geq0$, which is taken to be the dynamics of the system on $A(I)$, we
need this dynamics to be carried over to $A(\iota(I))$ in order to have a copy
of the dynamics $\tau$ on $A(\iota(I))$. So consider the $\ast$-isomorphism
\[
\eta:A(I)\rightarrow A(\iota(I))
\]
given by $\eta(a_{l})=a_{\iota(l)}$ for all $l\in I$. Then copy the dynamics
on $A(I)$ to $A(\iota(I))$ by
\[
\tau_{t}^{\iota}:A(\iota(I))\rightarrow A(\iota(I)):
a\mapsto\eta\circ\tau_{t}\circ\eta^{-1}.
\]
Similarly we can define $\alpha^\iota=\eta\circ\alpha\circ\eta^{-1}$ for any linear
 $\alpha:A(I)\rightarrow A(I)$.

In analogy to Eq. (\ref{fb}), we then say that $(A(I),\rho_{I},\tau)$
satisfies \emph{fermionic standard quantum detailed balance} when
\begin{equation}
\varphi(a\tau_{t}^{\iota}(b))=\varphi(\tau_{t}(a)b) \label{ffbKont}
\end{equation}
for all $a\in A(I)$ and $b\in A(\iota(I))$, and all $t\geq0$.

\begin{remark}
	Typically one would be interested in the case where all the probabilities
	 $p_M$ appearing in $\rho_I$, as given by (\ref{roI}), are non-zero. 
	 However, this is not 	 mathematically essential at the moment. In fact, 
	 it only becomes important when studying duality in Section \ref{AfdDuaal}.
\end{remark}

Note that the time variable $t$ does not play an essential role in what we
have done so far. We could equally well only consider a single (completely)
positive unital map $\tau:A(I)\rightarrow A(I)$, which means that we in effect
only consider one instant in time (or a discrete set of instants in time upon
iterating the single map). Then fermionic standard quantum detailed balance
of $(A(I),\rho_{I},\tau)$ is expressed as
\begin{equation}
\varphi(a\tau^{\iota}(b))=\varphi(\tau(a)b) \label{ffb}
\end{equation}
for all $a\in A(I)$ and $b\in A(\iota(I))$.

\section{An example}\label{AfdVb}


We exhibit a simple example of fermionic standard quantum detailed balance as
defined by Eq. (\ref{ffb}). The example is based on an example of the type
discussed in \cite[Section 6]{AFQ}, \cite{BQ}, \cite[Section 5]{FR10},
\cite[Subsection 7.1]{FR} and \cite[Section 7]{DS2}, but adapted to the
fermionic framework. At its core it can be viewed as consisting of
``balanced'' cycles. We concentrate on discrete time, but at the end of 
the section we explain how to extend this example to continuous time.

Start by defining a unitary operator
\[
U:H\rightarrow H
\]
by setting
\[
Uf_{\varnothing}=f_{\varnothing}
\]
and
\[
Uf_{(l_{1},...,l_{n})}=f_{(\sigma(l_{1}),...,\sigma(l_{n}))}
\]
for all finite sequences $(l_{1},...,l_{n})$ in $L$, for any $n=1,2,3,...$,
where
\[
\sigma:L\rightarrow L
\]
is a permutation of $I$. By this we mean that $\sigma|_{I}:I\rightarrow I$ is
a bijection, while $\sigma|_{L\backslash I}$ is the identity map on the
complement $L\backslash I$ of $I$. The operator $U$ is well-defined, since the
vectors $f_{(l_{1},...,l_{n})}$ include an orthonormal basis for $H$, as
mentioned in Section \ref{AfdProdDiag}, while it is easily checked from the
definition of $f_{(l_{1},...,l_{n})}$ in Eq. (\ref{f}) and Eq. (\ref{P}) that
any permutation of $l_{1},...,l_{n}$ is consistent with the definition of $U$.

It follows that
\[
U^{\ast}a_{l}^{\ast}Uf_{(l_{1},...,l_{n})}=U^{\ast}f_{(l,\sigma(l_{1}
),...,\sigma(l_{n}))}=f_{(\sigma^{-1}(l),l_{1},...,l_{n})}=a_{\sigma^{-1}
(l)}^{\ast}f_{(l_{1},...,l_{n})}
\]
so
\[
U^{\ast}a_{l}^{\ast}U=a_{\sigma^{-1}(l)}^{\ast}
\]
which means in particular that
\[
U^{\ast}A(I)U=A(I).
\]
Furthermore,
\begin{equation}
Ua_{l}^{\ast}U^{\ast}=a_{\sigma(l)}^{\ast} \label{UU*}
\end{equation}
so for any fixed $\lambda\in\lbrack0,1]$ we obtain a well-defined unital completely
positive map
\[
\tau:A(I)\rightarrow A(I)
\]
by setting
\begin{equation}
\tau(a)=\lambda U^{\ast}aU+(1-\lambda)UaU^{\ast} \label{tauVb}
\end{equation}
for all $a\in A(I)$. Keep in mind that $\sigma$ can be decomposed into cycles,
so in effect $\tau$ is built from two sets of cycles, the one set being
opposite to the other. It will shortly become clear that if we ``balance''
these opposite cycles by taking $\lambda=1/2$, then fermionic standard quantum detailed balance
emerges, analogous to the usual (or generic) case.

As in Section \ref{AfdProdDiag} we consider a bijection $\iota:I\rightarrow J$
where $I$ and $J$ are disjoint subsets of $L$, and as in Section \ref{AfdFb}
we copy the dynamics $\tau$ to $A(\iota(I))$. Explicitly we can do it as follows:

Define a permutation $\sigma^{\iota}:L\rightarrow L$ of $J$ by
\[
\sigma^{\iota}(l)=\iota\circ\sigma\circ\iota^{-1}(l)
\]
for $l\in J$, and
\[
\sigma^{\iota}(l)=l
\]
for $l\in L\backslash J$. Using this, we define a unitary operator
\[
V:H\rightarrow H
\]
by
\[
Vf_{\varnothing}=f_{\varnothing}
\]
and
\[
Vf_{(l_{1},...,l_{n})}=f_{(\sigma^{\iota}(l_{1}),...,\sigma^{\iota}(l_{n}))}
\]
for all finite sequences $(l_{1},...,l_{n})$ in $L$. Then we can define the
copy $\tau^{\iota}$ of $\tau$ on $A(\iota(I))$ by
\begin{equation}
\tau^{\iota}(b)=\lambda V^{\ast}bV+(1-\lambda)VbV^{\ast} \label{tauVbKopie}
\end{equation}
for all $b\in A(\iota(I))$.

Now, as opposed to the usual case of standard quantum detailed balance in Eq. (\ref{fb}), where
one uses the tensor product, we now make use of the properties of our
fermionic lattice, in particular the fact that $I$ and $\iota(I)$ are
disjoint, to show how fermionic standard quantum detailed balance is obtained.
The key point in this respect, is that from Eq. (\ref{UU*}) we deduce
\[
a_{l}^{\ast}U=Ua_{l}^{\ast}
\]
for all $l\in L\backslash I$, hence
\[
bU=Ub
\]
for all $b\in A(\iota(I))$, and similarly
\[
aV=Va
\]
for all $a\in A(I)$.

Using this we can do the following calculation for any $a\in A(I)$ and $b\in
A(\iota(I))$, in terms of the state $\varphi$ given by (\ref{fi}) and
(\ref{diag}), to obtain conditions under which fermionic standard quantum
detailed balance is satisfied:

Firstly,
\[
\varphi(\tau(a)b)=\lambda\left\langle \Phi,U^{\ast}aUb\Phi\right\rangle
+(1-\lambda)\left\langle \Phi,UaU^{\ast}b\Phi\right\rangle .
\]
But, using the notation $\sigma(M):=(\sigma(l_{1}),...,\sigma(l_{n}))$, when
$M=(l_{1},...,l_{n})$, we obtain
\begin{align*}
&  \left\langle \Phi,U^{\ast}aUb\Phi\right\rangle \\
&  =\left\langle U\Phi,abU\Phi\right\rangle \\
&  =\left\langle \sum_{M\in D_{I}}p_{M}^{1/2}f_{\sigma(M)\iota(M)}
,ab\sum_{N\in D_{I}}p_{N}^{1/2}f_{\sigma(N)\iota(N)}\right\rangle \\
&  =\left\langle \sum_{\{M:\sigma^{-1}(M)\in D_{I}\}}p_{\sigma^{-1}(M)}
^{1/2}f_{M\iota(\sigma^{-1}(M))},ab\sum_{\{N:\sigma^{-1}(N)\in D_{I}
\}}p_{\sigma^{-1}(N)}^{1/2}f_{N\iota(\sigma^{-1}(N))}\right\rangle \\
&  =\left\langle \sum_{\{M:\sigma^{-1}(M)\in D_{I}\}}p_{\sigma^{-1}(M)}
^{1/2}V^{\ast}f_{M\iota(M)},ab\sum_{\{N:\sigma^{-1}(N)\in D_{I}\}}
p_{\sigma^{-1}(N)}^{1/2}V^{\ast}f_{N\iota(N)}\right\rangle \\
&  =\left\langle \Phi_{1},aVbV^{\ast}\Phi_{1}\right\rangle
\end{align*}
where
\[
\Phi_{1}:=\sum_{\{M:\sigma^{-1}(M)\in D_{I}\}}p_{\sigma^{-1}(M)}
^{1/2}f_{M\iota(M)}=\sum_{M\in D_{I}}p_{M}^{1/2}f_{\sigma(M)\iota(\sigma
(M))}.
\]
Now, for any sequence $M\in D_{I}$, let
\[
\sigma_{M}:L\rightarrow L
\]
be the permutation of $M$ such that
\[
\sigma\circ\sigma_{M}(M)\in D_{I}.
\]
The point here is that $\sigma(M)$ is a sequence in $I$, but it need not be in
the correct order to be an element of $D_{I}$. The permutation $\sigma_{M}$
corrects for this. If we assume that the density matrix $\rho_{I}$ in
(\ref{roI}) satisfies
\begin{equation}
p_{\sigma_{M}^{-1}\circ\sigma^{-1}(M)}=p_{M} \label{inv}
\end{equation}
for all $M\in D_{I}$, then it follows from the definition of 
$f_{(l_{1},...,l_{n})}$ in Eqs. (\ref{P}), (\ref{fleeg}) and (\ref{f}), 
in particular how the sign of $f_{(l_{1},...,l_{n})}$ may change due to 
permutation, that
\[
\Phi_{1}=\sum_{M\in D_{I}}p_{M}^{1/2}f_{\sigma\circ\sigma_{M}(M)\iota
(\sigma\circ\sigma_{M}(M))}=\sum_{M\in D_{I}}p_{\sigma_{M}^{-1}\circ
\sigma^{-1}(M)}^{1/2}f_{M\iota(M)}=\Phi.
\]
A simpler (but less general) assumption that ensures Eq. (\ref{inv}), is
\begin{equation}
p_{M}=p_{N} \label{inv2}
\end{equation}
for any pair of sequences $M,N\in D_{I}$ that have the same length.

Then
\[
\left\langle \Phi,U^{\ast}aUb\Phi\right\rangle =\left\langle \Phi,aVbV^{\ast
}\Phi\right\rangle .
\]

If we let
\[
\bar{\sigma}_{M}:L\rightarrow L
\]
be the permutation of $M\in D_{I}$ such that
\[
\sigma^{-1}\circ\bar{\sigma}_{M}(M)\in D_{I},
\]
and we assume that the density matrix $\rho_{I}$ satisfies
\begin{equation}
p_{\bar{\sigma}_{M}^{-1}\circ\sigma(M)}=p_{M} \label{inv'}
\end{equation}
for all $M\in D_{I}$, we also obtain
\[
\left\langle \Phi,UaU^{\ast}b\Phi\right\rangle =\left\langle \Phi,aV^{\ast
}bV\Phi\right\rangle .
\]
From this we conclude that if conditions (\ref{inv}) and (\ref{inv'}) are
satisfied, or alternatively just condition (\ref{inv2}), then
\[
\varphi(\tau(a)b)=\lambda\left\langle \Phi,aVbV^{\ast}\Phi\right\rangle
+(1-\lambda)\left\langle \Phi,aV^{\ast}bV\Phi\right\rangle .
\]
Under these conditions, if
\[
\lambda=1/2,
\]
then
\[
\varphi(\tau(a)b)=\varphi(a\tau^{\iota}(b))
\]
for all $a\in A(I)$ and $b\in A(\iota(I))$. That is, $(A(I),\rho_{I},\tau)$ then satisfies fermionic standard quantum detailed balance.

This example can easily be adapted to a continuous time example by using the
form $\mathcal{L}(a)=\lambda U^{\ast}aU+(1-\lambda)UaU^{\ast}-a$ as the
generator of a quantum Markov semigroup which then satisfies fermionic
detailed balance in the form (\ref{ffbKont}).

\section{Detailed balance vs fermionic detailed balance}\label{AfdTeenVb}


Here we show that the standard quantum detailed balance and fermionic standard
quantum detailed balance conditions discussed in Section \ref{AfdFb}, are not
equivalent when we consider dynamics on an $A(I)$. More
precisely, we exhibit a simple example of a unital completely positive map on
$A(I)$ satisfying standard quantum detailed balance but not fermionic standard
quantum detailed balance. As is the case with the dynamics in Section
\ref{AfdFb}, this example will nevertheless still have even dynamics, the
definition of which is also discussed in this section.

Consider the following example of standard quantum detailed balance, which is
similar to the example discussed in Section \ref{AfdVb}, but not tailored to
the fermionic case:

Consider a $n$-dimensional Hilbert space $K$ with orthonormal basis
$d_{1},...,d_{n}$. Given a permutation $\varpi\in S_{n}$ of $\{1,...,n\}$, we
define a unitary operator $W:K\rightarrow K$ by
\[
Wd_{j}=d_{\varpi(j)}
\]
for $j=1,...,n$. That is, we are considering a permutation of the orthonormal
basis. For $0\leq\lambda\leq1$ we then consider the unital completely positive
map $\alpha:B(K)\rightarrow B(K)$ given by
\[
\alpha(a)=\lambda W^{\ast}aW+(1-\lambda)WaW^{\ast}
\]
for all $a\in B(K)$. Now, using the basis $d_{1},...,d_{n}$ to represent
$B(K)$ as $M_{n}$ as in Section \ref{AfdFb}, assume that the probabilities
$p_{j}$ from Section \ref{AfdFb} are equal for the basis vectors $d_{j}$ lying
in the same cycle of the decomposition of $\sigma$ into cycles. It is then
straightforward to check that standard quantum detailed balance is satisfied
when 
$\lambda=1/2$
(the argument is similar, but notationally simpler than that in Section
\ref{AfdVb}).

In particular, keeping in mind that $A(I)$ is isomorphic to $M_{2^{|I|}}$,
we can in this way obtain dynamics on $A(I)$ satisfying standard quantum
detailed balance, using the vectors $f_{M}$, $M\in D_{I}$, as the orthonormal
basis in the place of $d_{1},...,d_{n}$, and applying a permutation to this
basis. In particular, we see from this argument that the dynamics on $A(I)$ in
Section \ref{AfdVb} satisfies (usual) standard quantum detailed balance when
$\lambda=1/2$, if the probabilities $p_{M}$ are equal for basis vectors
$f_{M}$ lying in the same cycle of the permutation. However, if the
permutation of the orthonormal basis, namely the vectors $f_{M}$, is not
obtained from a permutation of $I$ as in Section \ref{AfdVb}, then the
argument in Section \ref{AfdVb} falls apart. In that case we can not expect to
have fermionic standard quantum detailed balance in general, even though
standard quantum detailed balance is satisfied as explained above.

We can illustrate this explicitly, while preserving a basic property that the
dynamics $\tau$ in Section \ref{AfdVb} has, namely that it is even. We
describe this concept before continuing to our example:


Define a unitary operator
\[
\theta:H\rightarrow H
\]
on the Fermi-Fock space via $\theta f_{M}=f_{M}$ if the sequence $M$ has even
length, while $\theta f_{M}=-f_{M}$ if the sequence $M$ has odd length. Note
that $\theta=\theta^{\ast}$. Furthermore, it is easily confirmed that
\[
\theta a_{l}^{\ast}\theta=-a_{l}^{\ast}
\]
for all $l\in L$, by applying $\theta a_{l}^{\ast}\theta$ to the basis vectors
$f_{M}$. By taking the adjoint both sides, we also have
\[
\theta a_{l}^{\ast}\theta=-a_{l}^{\ast}
\]
for all $l\in L$. Therefore we can define a $\ast$-automorphism 
$\Theta_{I}$ of $A(I)$ by
\[
\Theta_{I}(a)=\theta a\theta
\]
for all $a\in A(I)$. This works for every subset $I$ of $L$, including $I=L$,
so we may as well just work with $\Theta=\Theta_{L}$, since then $\Theta_{I}$
is just the restriction of $\Theta$ to $A(I)$. (Also see \cite[Section 4.1]{AM}.)

One can then show that the dynamics $\tau$ in Section \ref{AfdVb} is
\emph{even} for all $\lambda\in\lbrack0,1]$, by which we mean that
\[
\tau\circ\Theta=\Theta\circ\tau.
\]
This follows from $U\theta=\theta U$, which is true, since in $Uf_{(l_{1}
	,...,l_{n})}=f_{(\sigma(l_{1}),...,\sigma(l_{n}))}$ the sequences
$(l_{1},...,l_{n})$ and $(\sigma(l_{1}),...,\sigma(l_{n}))$ have the same
length, so both are even or both are odd.

We now construct an example of dynamics which is even and satisfies standard
quantum detailed balance, but not fermionic standard quantum detailed balance:


Consider the case $|I|=2$, i.e. $A(I)$ is generated by two operators $a_{l}$;
let us call them $a_{1}$ and $a_{2}$. We use $D_{I}=\{\varnothing,(1),(2),(1,2)\}$. Furthermore, for the rest of this section, we set
\[
p_{\varnothing}=p_{(1)}=p_{(2)}=p_{(1,2)}=\frac{1}{4}
\]
as the probabilities appearing in $\rho_{I}$ given by (\ref{roI}), 
and in terms of which we express the detailed balance conditions. 
We take $|J|=2$ as well, with $I\cap J$, and let $a_{3}$ and $a_{4}$ 
denote the generators of $A(J)$. The bijection
$\iota:I\rightarrow J$ we use is given by
\[
\iota(1)=3\text{, \ }\iota(2)=4.
\]

The dynamics on $A(I)$ will be obtained from a unitary operator
\[
U_{I}:H_I\rightarrow H_I,
\]
where $H_{I}$ is the subspace of $H$ spanned by the set 
$\{f_{\varnothing},f_{(1)},f_{(2)},f_{(1,2)}\}$, and $U_I$ is defined via
\[
U_{I}f_{\varnothing}=f_{(1)}\text{, \ }U_{I}f_{(1)}=f_{(1,2)}\text{, \ }
U_{I}f_{(1,2)}=f_{(2)}\text{, \ }U_{I}f_{(2)}=f_{\varnothing}.
\]
Note that this is a permutation of the basis not given by a permutation of the
set $I=\{1,2\}$.

View $A(I)$ as being faithfully represented on $H_{I}$ by $\pi_{I}$ via
\[
\pi_{I}(a):=a|_{H_{I}},
\]
i.e. we restrict the elements of $A(I)$ to $H_{I}$. In other words, we
represent $A(I)$ faithfully as $\pi_{I}(A(I))$, which is isomorphic to $M_{4}$. 
The reason for this is that we have not defined $U_{I}$ on the whole of
$H$; trying to extend $U_{I}$ to $H$, and working with $A(I)$ itself, is
inconvenient in this case. However, for simplicity we suppress the $\pi_I$ in
our notation below. Analogously for $A(J)$ on $H_{J}$.

By applying $U_{I}^{\ast}a_{1}U_{I}$ to the basis vectors $f_{\varnothing}$,
$f_{(1)}$, $f_{(2)}$ and $f_{(1,2)}$ of $H_{I}$, one can verify the formula
\[
U_{I}^{\ast}a_{1}U_{I}=a_{2}^{\ast}[a_{1},a_{1}^{\ast}]
\]
where $[\cdot,\cdot]$ denotes the commutator. Similarly we have the formulas
\[
U_{I}a_{1}U_{I}^{\ast}=a_{2}[a_{1},a_{1}^{\ast}],
\]
\[
U_{I}^{\ast}a_{2}U_{I}=a_{1}[a_{2}^{\ast},a_{2}],
\]
and
\[
U_{I}a_{2}U_{I}^{\ast}=a_{1}^{\ast}[a_{2},a_{2}^{\ast}].
\]

For any fixed $\lambda\in\lbrack0,1]$ we can then define dynamics $\alpha$ on $A(I)$
by
\[
\alpha(a)=\lambda U_{I}^{\ast}aU_{I}+(1-\lambda)U_{I}aU_{I}^{\ast}
\]
for all $a\in A(I)$. It is straightforward to check from the formulas above,
that $\alpha$ is even, i.e. $\alpha\circ\Theta=\Theta\circ\alpha$.

We copy $\alpha$ to $A(J)$ via
\[
\alpha^{\iota}(b)=\lambda V_{J}^{\ast}bV_{J}+(1-\lambda)V_{J}bV_{J}^{\ast},
\]
using the correspondingly defined unitary operator $V_{J}$ on $H_{J}$ given by
\[
V_{J}f_{\varnothing}=f_{(3)}\text{, \ }V_{J}f_{(3)}=f_{(3,4)}\text{, \ }
V_{J}f_{(3,4)}=f_{(4)}\text{, \ }V_{J}f_{(4)}=f_{\varnothing},
\]
and for which corresponding formulas as for $U_{I}$ above hold.

Let us now study detailed balance for the case $\lambda=1/2$. As already
explained earlier in this section, standard quantum detailed balance is then
satisfied. However, fermionic standard quantum detailed balance is not:

\[
\varphi(\alpha(a_{1})a_{4}^{\ast})=\frac{1}{4}
\]
while
\[
\varphi(a_{1}\alpha^{\iota}(a_{4}^{\ast}))=-\frac{1}{4},
\]
for $\varphi$ in (\ref{fi}). This can be verified using (\ref{diag}), which
here is
\[
\Phi=\frac{1}{2}(f_{\varnothing}+f_{(1,3)}+f_{(2,4)}+f_{(1,2,3,4)}),
\]
as well as
\[
\alpha(a_{1})=\frac{1}{2}(a_{2}+a_{2}^{\ast})[a_{1},a_{1}^{\ast}]
\]
and
\[
\alpha^{\iota}(a_{4})=\frac{1}{2}(a_{3}^{\ast}-a_{3})[a_{4},a_{4}^{\ast}],
\]
and by then calculating that
\[
\alpha(a_{1}^{\ast})\Phi=\alpha(a_{1})^{\ast}\Phi=
\frac{1}{4}(f_{(2)}+f_{(4)}+f_{(1,2,3)}+f_{(1,3,4)}),
\]
\[
a_{4}^{\ast}\Phi=\frac{1}{2}(f_{(4)}+f_{(1,3,4)}),
\]
\[
a_{1}^{\ast}\Phi=\frac{1}{2}(f_{(1)}+f_{(1,2,4)}),
\]
and
\[
\alpha^{\iota}(a_{4}^{\ast})\Phi=
\frac{1}{4}(-f_{(1)}-f_{(3)}-f_{(1,2,4)}-f_{(2,3,4)}),
\]
and then evaluating 
$\varphi(\alpha(a_{1})a_{4}^{\ast})=
\langle\alpha(a_{1}^{\ast})\Phi,a_{4}^{\ast}\Phi\rangle$ 
and 
$\varphi(a_{1}\alpha^{\iota}(a_{4}^{\ast}))=
\langle a_{1}^{\ast}\Phi,\alpha^{\iota}(a_{4}^{\ast})\Phi\rangle $.

Thus $(A(I),\rho_{I},\alpha)$ is indeed an example with 
even dynamics satisfying standard quantum detailed balance, 
but not fermionic standard quantum detailed balance.

\section{Duality}\label{AfdDuaal}


Certain types of duals (or adjoints) of dynamical maps play an important role
in quantum detailed balance conditions. See for example \cite{MS, FR, DS,
DS2}. Here we show that some, but not all, aspects of duality survive in our
framework for fermionic standard quantum detailed balance. Our discussion here
is of a preliminary nature and we suspect that it should be possible to develop
duality in the fermionic case further.

A very basic duality appearing in relation to standard quantum detailed
balance arises from the following bilinear form, defined in terms of the state
$\omega$ given in (\ref{omega}):
\[
B_{\omega}:M_{n}\times M_{n}\rightarrow\mathbb{C}:(a,b)\mapsto\omega(a\otimes
b).
\]
We note that for $a,b\geq0$, in the usual operator algebraic sense, i.e. $a$
and $b$ are self-adjoint operators with non-negative spectra, we have
\begin{equation}
B_{\omega}(a,b)\geq0. \label{pos}
\end{equation}
For a linear map $\alpha:M_{n}\rightarrow M_{n}$, the unique dual map
$\alpha^{\prime}:M_{n}\rightarrow M_{n}$ such that
\[
B_{\omega}(\alpha(a),b)=B_{\omega}(a,\alpha^{\prime}(b))
\]
for all $a,b\in M_{n}$, is then of some importance in connection to standard
quantum detailed balance and can for example be used to define the KMS-dual of
a positive map (again see the references mentioned above, for example
\cite[Definition 2.9]{DS2}, but also \cite{Pet} and \cite[Proposition
8.3]{OPet}). The positivity of $B_{\omega}$ mentioned above is
necessary in showing $\alpha^{\prime}$ is $n$-positive when $\alpha$ is. These
points in fact hold in a much more general infinite dimensional von Neumann
algebraic setup, not just on $M_{n}$ (see \cite[Proposition 3.1]{AC} and
\cite{DS2}). The special case $M_{n}$ fits into the general von Neumann
algebraic framework by representing the first copy of $M_{n}$ appearing in
$B_{\omega}$ as $M_{n}\otimes1_{n}$, with $1_{n}$ the $n\times n$ identity
matrix, while the commutant $1_{n}\otimes M_{n}$ of this representation serves
as the second copy of $M_{n}$ in $B_{\omega}$.

In analogy to this, we can study the following bilinear form in the fermionic
case, in terms of the state $\varphi$ given by (\ref{fi}), and where $I$ is
again a finite subset of $L$, and $\iota:I\rightarrow L$ is an injection such
that $I$ and $\iota(I)$ are disjoint:
\[
B_{\varphi}:A(I)\times A(\iota(I))\rightarrow\mathbb{C}:(a,b)\mapsto
\varphi(ab).
\]
Note that as in Section \ref{AfdFb}, the role of the tensor product structure
is here essentially taken over by the fermion lattice structure. Below it will
be seen that $B_{\varphi}$ does allow us to define dual maps similar to
$B_{\omega}$, but on the other hand it does not satisfy the positivity
property $B_{\varphi}(a,b)\geq0$ for all $a\geq0$ and $b\geq0$.

Note that in the definition of $B_{\varphi}$ we are in effect making a choice,
since we could equally well have used the definition 
$B_{\varphi}(a,b)=\varphi(ba)$. In $B_{\omega}$ no corresponding choice had to be made,
since $(a\otimes1)(1\otimes b)=a\otimes b=(1\otimes b)(a\otimes1)$. This is an
indication that the definition of $B_{\varphi}$ is not quite as natural as
that of $B_{\omega}$.

In order to show that we can define a dual in the fermionic case, analogous to
the usual case above, we first study the relevant properties of $B_{\varphi}$
as defined above. The main technical property we need is the following
non-degeneracy of $B_{\varphi}$:

\begin{proposition}
\label{nie-ont}In the definition of $\Phi$, given by (\ref{diag}), assume that
$p_{M}>0$ for all $M\in D_{I}$. Then:

(i) If $B_{\varphi}(a,b)=0$ for all $a\in A(I)$, for some $b\in A(\iota(I))$,
then $b=0$.

(ii) If $B_{\varphi}(a,b)=0$ for all $b\in A(\iota(I))$, for some $a\in A(I)$,
then $a=0$.
\end{proposition}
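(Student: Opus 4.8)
The plan is to exploit the explicit orthonormal-basis description of $H$ together with the elementary action of the $a_l^\ast$ on the $f_M$. First I would fix $b \in A(\iota(I))$ and expand it in a convenient spanning set: since $A(\iota(I))$ is finite dimensional and generated by the $a_{\iota(l)}$, $a_{\iota(l)}^\ast$ for $l\in I$, every element is a linear combination of products of such operators, and using the anti-commutation relations one may write $b = \sum_{M,N\in D_I} c_{MN}\, a_{\iota(M)}^\ast a_{\iota(N)}$ for suitable scalars $c_{MN}$, where $a_{\iota(M)}^\ast := a_{\iota(m_1)}^\ast\cdots a_{\iota(m_j)}^\ast$ for $M=(m_1,\dots,m_j)$ (a standard normal-ordered basis for the CAR algebra on a finite index set). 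The key observation is that $a_{\iota(M)}^\ast a_{\iota(N)}$ maps $f_{\iota(N)}$ to $\pm f_{\iota(M)}$ and kills $f_{\iota(K)}$ for $K\neq N$; more to the point, applied to a vector of the form $f_{K\iota(N)}$ with $K\in D_I$ it gives $\pm f_{K\iota(M)}$, because the $\iota(I)$-creation/annihilation operators only see the ``$\iota(I)$-part'' of the sequence, up to a sign coming from moving operators past the $K$-block (here $I\cap\iota(I)=\varnothing$ is essential, exactly as in Section~\ref{AfdVb}).

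With this in hand, the main computation is to evaluate $B_\varphi(a,b) = \langle \Phi, ab\,\Phi\rangle$ for a cleverly chosen family of $a$'s. Taking $a = a_M^\ast a_N$ for $M,N \in D_I$ (the analogous normal-ordered basis of $A(I)$), and using $b\,\Phi = \sum_{K}p_K^{1/2}\, b\, f_{K\iota(K)} = \sum_{K,M',N'} c_{M'N'}\, p_K^{1/2}\,(\pm)\,\delta_{K,N'}\, f_{K\iota(M')}$, one gets that $ab\,\Phi$ is a linear combination of vectors $f_{M'\iota(M'')}$ with coefficients that are (signed) products of the $c$'s, the $p$'s, and the matrix entries of $a$. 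Pairing against $\Phi = \sum_K p_K^{1/2} f_{K\iota(K)}$ and using orthonormality of the $f$-basis, $B_\varphi(a,b)$ becomes an explicit bilinear expression in the $c_{MN}$ weighted by the strictly positive numbers $p_M$. Choosing $a$ to isolate one basis vector $f_{M_0\iota(N_0)}$ (e.g. $a = a_{M_0}^\ast a_{N_0}$, possibly after a further sign/coefficient bookkeeping) forces the corresponding weighted coefficient of $b$ to vanish; since all $p_M>0$ by hypothesis, this yields $c_{M_0N_0}=0$. Ranging over all $M_0,N_0$ gives $b=0$, proving (i). Part (ii) is entirely symmetric: fix $a$, expand it in the normal-ordered basis of $A(I)$, and test against $b = a_{\iota(M_0)}^\ast a_{\iota(N_0)}$, running the same argument with the roles of the two blocks interchanged (the only asymmetry, noted before the proposition, is the ordering $ab$ versus $ba$ in $B_\varphi$, but the disjointness of $I$ and $\iota(I)$ makes $a$ and $b$ commute up to the grading sign, so this costs at most an overall sign and does not affect vanishing).

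The step I expect to be the genuine obstacle is the sign bookkeeping: tracking the exact signs that arise when the $\iota(I)$-operators in $b$ are moved past the $I$-block of $f_{K\iota(K)}$, and when the normal-ordered monomials act on the $f$-sequences (cf. the formulas $a_l^\ast f_{(l_1,\dots,l_n)} = f_{(l,l_1,\dots,l_n)}$ and $a_l f_{(l_1,\dots,l_n)} = (-1)^{k-1}f_{(l_1,\dots,\hat l_k,\dots,l_n)}$). These signs are never zero, so they cannot destroy non-degeneracy, but one must be careful to organize the argument so that they manifestly drop out — the cleanest way is probably to argue abstractly that the map $b \mapsto b\,\Phi$ from $A(\iota(I))$ into $H$ is injective (because $\{f_{K\iota(K)} : K\in D_I\}$ together with the positivity $p_K>0$ lets one recover $b$ from $b\,\Phi$ via its action on each ``cyclic'' piece), and that $a\mapsto \langle\Phi,\,a\,(\,\cdot\,)\rangle$ separates points of the relevant subspace — thereby reducing the whole proposition to the single statement that $\Phi$ is cyclic and separating for $A(I)$ (equivalently $A(\iota(I))$) in an appropriate sense, which is the fermionic analogue of the familiar fact about $\Omega$ in the tensor-product case.
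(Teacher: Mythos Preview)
Your main computational route has a concrete error: the claim that $a_{\iota(M)}^\ast a_{\iota(N)}$ ``kills $f_{\iota(K)}$ for $K\neq N$'' is false---for instance $a_{\iota(l)}^\ast a_{\iota(l)}$ acts as the identity on every $f_{\iota(K)}$ with $l\in K$, not just on $f_{\iota((l))}$. In general $a_{\iota(M)}^\ast a_{\iota(N)} f_{\iota(K)}\neq 0$ whenever $N\subseteq K$ and $M\cap(K\setminus N)=\varnothing$ as sets, so your test elements $a=a_{M_0}^\ast a_{N_0}$ do not isolate a single coefficient $c_{M_0N_0}$; the resulting linear system is far from diagonal, and the obstruction is structural, not mere sign bookkeeping. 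The paper's remedy is to bring in the \emph{complement}: for $M\in D_I$ the element $a_{M'}^\ast a_M$ (with $M'$ the sequence in $D_I$ complementary to $M$) genuinely annihilates $f_{K\iota(K)}$ for every $K\neq M$, since $a_M$ forces $M\subseteq K$ while $a_{M'}^\ast$ forces $K\setminus M\subseteq M$. After a further projection by $a_{M'}a_{M'}^\ast$ one is left with matrix entries $\langle f_{\iota(N)},d\,f_{\iota(M)}\rangle$ of a sign-twisted copy $d$ of $b$, and these are then shown to vanish by an induction on the lengths of $M$ and $N$, starting from $M=N=\varnothing$. So even the paper's argument is triangular rather than diagonal.

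Your closing alternative---reducing to ``$\Phi$ cyclic and separating''---is a valid and in fact cleaner route than the paper's explicit computation, but you stop short of the crucial step. Separating for $A(I)$ and for $A(\iota(I))$ follows at once from the reductions $\langle\Phi,c\Phi\rangle=\operatorname{Tr}(\rho_I c)$ (resp.\ $\operatorname{Tr}(\rho_{\iota(I)} c)$) recorded in Section~\ref{AfdProdDiag}, together with faithfulness of these density matrices when all $p_M>0$. What remains, and what you do not supply, is $A(I)\Phi=A(\iota(I))\Phi$: both sit inside the $2^{2|I|}$-dimensional span of $\{f_{M\iota(N)}:M,N\in D_I\}$ (elements of $A(I)$ touch only the $I$-block of each $f_{K\iota(K)}$, and dually for $A(\iota(I))$), and both have dimension $2^{2|I|}$ by the separating property, hence both equal that span. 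With this in hand, $B_\varphi(a,b)=0$ for all $a$ gives $b\Phi\perp A(I)\Phi\ni b\Phi$, so $b\Phi=0$ and then $b=0$; part (ii) is symmetric.
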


\begin{proof}
For any $M\in D_{I}$, let $a_{M}:=a_{m_{s}}...a_{m_{1}}$ and 
$a_{\iota(M)}:=a_{\iota(m_{s})}...a_{\iota(m_{1})}
$, where $M$ is the sequence
$(m_{1},...,m_{s})$, while we set $a_{M}=a_{\iota(M)}=1$ if $M$ is the empty
sequence. Furthermore, let $M^{\prime}\in D_{I}$ denote the complement of
$M\in D_{I}$, i.e. $M^{\prime}$ is the sequence in $D_{I}$ consisting of the
elements of $I$ not appearing in the sequence $M$. Keep in mind that the
algebra $A(\iota(I))$ is spanned by the linearly independent set of $2^{2|I|}$
elements given by 
\[
a_{\iota(N)}^{\ast}a_{\iota(M)}
\]
for $M,N\in D_{I}$, i.e. these elements form a basis for $A(\iota(I))$.

Note that for any given $M\in D_{I}$ and $b\in A(\iota(I))$ we then have from
$I\cap\iota(I)=\varnothing$ and 
$\{a_{k},a_{l}\}=\{a_{k},a_{l}^{\ast}\}=\{a_{k}^{\ast},a_{l}\}=
\{a_{k}^{\ast},a_{l}^{\ast}\}=0$ for $k\in I$ and
$l\in\iota(I)$, that
\[
a_{M^{\prime}}^{\ast}a_{M}b=ca_{M^{\prime}}^{\ast}a_{M}
\]
where $c$ is obtained from $b$ by expanding $b$ in the above mentioned basis
for $A(\iota(I))$, and changing the signs of some terms due to the
anti-commutation relations mentioned above. (Which terms change sign will not
be important to us.) So $c$ depends on $b$ and $|I|$, since 
$a_{M^{\prime}}^{\ast}a_{M}$ contains $|I|$ factors of $a_{k}$ or 
$a_{k}^{\ast}$ in total.

In addition we have
\[
a_{M^{\prime}}^{\ast}a_{M}f_{M\iota(M)}=f_{M^{\prime}\iota(M)}
\]
and
\[
a_{M^{\prime}}^{\ast}a_{M}f_{N\iota(N)}=0
\]
for sequences $N\in D_I$ such that $N\neq M$. Therefore, from (\ref{diag}),
\[
a_{M^{\prime}}^{\ast}a_{M}b\Phi=p_{M}^{1/2}cf_{M^{\prime}\iota(M)}.
\]
Similarly it now follows that
\[
a_{M^{\prime}}a_{M^{\prime}}^{\ast}a_{M}b\Phi=
p_{M}^{1/2}d_{|M|}f_{\iota(M)}.
\]
where $d_{|M|}$ is obtained from $c$ via 
$a_{M^{\prime}}c=d_{|M|}a_{M^{\prime}}$ 
by again changing the signs of certain terms. Note that in the process a
dependence on the length of the sequence $M^{\prime}$, so equivalently on that
of $M$, appears, which we indicate by $|M|$.

Assume that $B_{\varphi}
(a,b)=0$ for all $a\in A(I)$. Then it follows for all
$M,N\in D_{I}$ that
\begin{align*}
\left\langle f_{\iota(N)},d_{|M|}f_{\iota(M)}\right\rangle  
&  =\frac{1}{(p_{M}p_{N})^{1/2}}\left\langle a_{N'}a_{N'}^{\ast}a_{N}\Phi,
a_{M'}a_{M'}^{\ast}a_{M}b\Phi\right\rangle \\
&  =\frac{1}{(p_{M}p_{N})^{1/2}}
B_{\varphi}(a_{N}^{\ast}a_{N'}a_{N'}^{\ast}a_{M'}a_{M'}^{\ast}a_{M},b)\\
&  =0
\end{align*}

Even though the signs of the terms in $d_{|M|}$ may vary as $M$ varies, the
fact that
\[
\left\langle f_{\iota(N)},
a_{\iota(N)}^{\ast}a_{\iota(M)}f_{\iota(M)}\right\rangle 
=\langle f_{\varnothing},f_{\varnothing}\rangle
=1,
\]
whereas $a_{\iota(M_{1})}f_{\iota(M_{2})}=0$ if the sequence $M_{1}$ contains
entries not present in $M_{2}$, means that each term in $d_{|M|}$ is zero. To
see this, start with the shortest sequence $M=N=\varnothing$ in $D_{I}$, i.e.
start with 
$\left\langle f_{\varnothing},d_{|\varnothing|}f_{\varnothing}\right\rangle=0$, 
to see that the basis element
$a_{\iota(\varnothing)}^{\ast}a_{\iota(\varnothing)}=1$ 
in the expansion of $d_{|\varnothing|}$, has coefficient $0$. 
Then progressively check longer sequences $M$ and $N$ in 
$\left\langle f_{\iota(N)},d_{|M|}f_{\iota(M)}\right\rangle$,
step by step. Therefore all basis elements in the expansion of $d_{|M|}$ have 
zero coefficients. Likewise for $b$, as its coefficients are the same, 
except possibly for the signs, i.e. $b=0$, proving (i).

Similarly for (ii).
\end{proof}

From this proposition we obtain the following:

\begin{corollary}
\label{linFunk}(i) Every linear functional $f$ on $A(I)$ is of the form
\[
f=B_{\varphi}(\cdot,b)
\]
for some $b\in A(\iota(I))$ uniquely determined by $f$.

(ii) Every linear functional $g$ on $A(\iota(I))$ is of the form
\[
g=B_{\varphi}(a,\cdot)
\]
for some $a\in A(\iota(I))$ uniquely determined by $g$.
\end{corollary}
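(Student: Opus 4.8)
The plan is to derive Corollary~\ref{linFunk} from Proposition~\ref{nie-ont} by a standard linear-algebra dimension count, exploiting the fact that all algebras in sight are finite dimensional. First I would fix attention on part (i). The assignment $b\mapsto B_{\varphi}(\cdot,b)$ is a linear map $T\colon A(\iota(I))\to A(I)^{\ast}$, where $A(I)^{\ast}$ denotes the dual vector space of linear functionals on $A(I)$; linearity in $b$ is immediate since $\varphi$ is linear and multiplication is bilinear. Proposition~\ref{nie-ont}(i) says precisely that $T$ is injective: if $T(b)=0$, i.e. $B_{\varphi}(a,b)=0$ for all $a\in A(I)$, then $b=0$. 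Hence it remains only to check that $T$ is surjective, and for that it suffices to compare dimensions.

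The key observation is that $\dim A(\iota(I)) = \dim A(I)^{\ast} = \dim A(I)$. Indeed, the $\ast$-isomorphism $\eta\colon A(I)\to A(\iota(I))$ from Section~\ref{AfdFb} shows $A(I)$ and $A(\iota(I))$ have the same (finite) dimension, namely $2^{2|I|}$ as already noted in the proof of Proposition~\ref{nie-ont} (the $2^{2|I|}$ elements $a_{\iota(N)}^{\ast}a_{\iota(M)}$, $M,N\in D_I$, form a basis); and a finite-dimensional vector space has the same dimension as its dual. Thus $T$ is an injective linear map between finite-dimensional spaces of equal dimension, hence an isomorphism, which gives both the existence of $b$ with $f=B_{\varphi}(\cdot,b)$ and its uniqueness (uniqueness being exactly injectivity of $T$, equivalently Proposition~\ref{nie-ont}(i) again). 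Part (ii) is entirely symmetric: one considers $S\colon A(I)\to A(\iota(I))^{\ast}$, $a\mapsto B_{\varphi}(a,\cdot)$, which is injective by Proposition~\ref{nie-ont}(ii), and the same dimension count forces it to be a bijection. (I would note in passing that there is a small typo to be wary of in the statement of (ii) as printed — the functional $g$ lives on $A(\iota(I))$ and is represented by some $a\in A(I)$, not $a\in A(\iota(I))$ — but this does not affect the argument.)

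Since the whole argument is just ``injective $+$ equal finite dimension $\Rightarrow$ bijective,'' there is no real obstacle once Proposition~\ref{nie-ont} is in hand; the only thing that genuinely needs to be invoked is finite dimensionality, which is guaranteed by $I$ being finite together with the anti-commutation relations. If one wanted to be fully self-contained one could instead argue surjectivity directly: given $f$, expand it against the explicit basis and solve for the coefficients of $b$ in the basis $\{a_{\iota(N)}^{\ast}a_{\iota(M)}\}$, but this reproduces the dimension count in coordinates and is strictly more tedious, so I would present the clean version. The proof is therefore expected to be one short paragraph in the paper, essentially ``Immediate from Proposition~\ref{nie-ont} and finite dimensionality of $A(I)$ and $A(\iota(I))$.''
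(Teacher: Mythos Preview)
Your proposal is correct and matches the paper's own proof essentially line for line: the paper also defines the map $F\colon A(\iota(I))\to A(I)^{\ast}$, $b\mapsto B_{\varphi}(\cdot,b)$, invokes Proposition~\ref{nie-ont} for injectivity, and then uses the dimension equality $\dim A(I)^{\ast}=\dim A(I)=\dim M_{2^{|I|}}=\dim A(\iota(I))$ to conclude surjectivity. Your observation about the typo in part (ii) (it should read $a\in A(I)$) is also apt.
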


\begin{proof}

Consider the linear map
\[
F:A(\iota(I))\rightarrow A(I)^{\ast}
:b\mapsto B_{\varphi}(\cdot,b)
\]
where $A(I)^{\ast}$ denotes the dual of $A(I)$, i.e. the space of all linear
functionals on $A(I)$. In order to show (i), we only need to prove that $F$ is
a bijection.

Because of Proposition \ref{nie-ont}, $F$ is injective. It follows that
\begin{align*}
\dim A(I)^{\ast}
&  =\dim A(I)=\dim M_{2^{|I|}}
=\dim A(\iota(I))\\
&  =\dim\ker F+\dim F(A(\iota(I))=\dim F(A(\iota(I))
\end{align*}
which means that $F$ is also surjective, i.e. $F$ is bijective as needed.

Claim (ii) follows similarly.
\end{proof}

Now we are in a position to obtain the dual of a linear map in terms of
$B_{\varphi}$.

\begin{theorem}
(i) Any linear map
\[
\alpha:A(I)\rightarrow A(I).
\]
has a unique \emph{fermionic dual} map
\[
\alpha^{\varphi}:A(\iota(I))\rightarrow A(\iota(I))
\]
(which is necessarily linear) such that
\[
B_{\varphi}(\alpha(a),b)=B_{\varphi}(a,\alpha^{\varphi}(b))
\]
for all $a\in A(I)$ and $b\in A(\iota(I))$.

(ii) Any linear map
\[
\beta:A(\iota(I))\rightarrow A(\iota(I))
\]
has a unique \emph{fermionic dual} map
\[
\beta^{\varphi}:A(I)\rightarrow A(I)
\]
(which is necessarily linear) such that
\[
B_{\varphi}(a,\beta(b))=B_{\varphi}(\beta^{\varphi}(a),b)
\]
for all $a\in A(I)$ and $b\in A(\iota(I))$.

(iii) In terms of (i) and (ii),
\[
\alpha^{\varphi\varphi}=\alpha\text{ \ and \ }\beta^{\varphi\varphi}=\beta
\]
\end{theorem}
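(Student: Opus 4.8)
The plan is to deduce existence and uniqueness of the fermionic dual directly from the non-degeneracy packaged in Corollary \ref{linFunk}, and then to derive the involutivity $\alpha^{\varphi\varphi}=\alpha$ from uniqueness. For part (i), fix a linear map $\alpha:A(I)\to A(I)$. For each fixed $b\in A(\iota(I))$, consider the linear functional $A(I)\to\mathbb{C}:a\mapsto B_{\varphi}(\alpha(a),b)$; by Corollary \ref{linFunk}(i) there is a unique $c\in A(\iota(I))$ with $B_{\varphi}(\alpha(a),b)=B_{\varphi}(a,c)$ for all $a\in A(I)$, and we define $\alpha^{\varphi}(b):=c$. Linearity of $\alpha^{\varphi}$ is then forced: for $b_{1},b_{2}$ and scalars, both $\alpha^{\varphi}(\lambda b_{1}+\mu b_{2})$ and $\lambda\alpha^{\varphi}(b_{1})+\mu\alpha^{\varphi}(b_{2})$ satisfy the defining identity $B_{\varphi}(\alpha(a),\lambda b_{1}+\mu b_{2})=B_{\varphi}(a,\cdot)$ for all $a$, so they coincide by the uniqueness clause of Corollary \ref{linFunk}(i). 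Uniqueness of $\alpha^{\varphi}$ as a whole is the same argument applied pointwise in $b$.

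Part (ii) is entirely parallel, now using Corollary \ref{linFunk}(ii): for fixed $a\in A(I)$, the functional $b\mapsto B_{\varphi}(a,\beta(b))$ on $A(\iota(I))$ is represented as $B_{\varphi}(a',\cdot)$ for a unique $a'\in A(I)$, and we set $\beta^{\varphi}(a):=a'$; linearity and uniqueness follow as before. One small point worth noting explicitly is the slot asymmetry flagged in the text before the proposition: since $B_{\varphi}(a,b)=\varphi(ab)$ rather than $\varphi(ba)$, the two duals are defined through the two different non-degeneracy statements, which is precisely why both clauses of Corollary \ref{linFunk} are needed rather than just one.

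For part (iii), apply (ii) to $\beta=\alpha^{\varphi}$ to obtain $(\alpha^{\varphi})^{\varphi}:A(I)\to A(I)$ characterized by $B_{\varphi}(a,\alpha^{\varphi}(b))=B_{\varphi}((\alpha^{\varphi})^{\varphi}(a),b)$ for all $a,b$. On the other hand, the defining identity of $\alpha^{\varphi}$ reads $B_{\varphi}(\alpha(a),b)=B_{\varphi}(a,\alpha^{\varphi}(b))$. Combining, $B_{\varphi}((\alpha^{\varphi})^{\varphi}(a),b)=B_{\varphi}(\alpha(a),b)$ for all $b\in A(\iota(I))$, hence by Proposition \ref{nie-ont}(ii) applied to the difference $(\alpha^{\varphi})^{\varphi}(a)-\alpha(a)$ we get $(\alpha^{\varphi})^{\varphi}(a)=\alpha(a)$ for every $a$, i.e. $\alpha^{\varphi\varphi}=\alpha$. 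The identity $\beta^{\varphi\varphi}=\beta$ is obtained symmetrically, starting from $\beta:A(\iota(I))\to A(\iota(I))$, forming $\beta^{\varphi}:A(I)\to A(I)$ via (ii), then $(\beta^{\varphi})^{\varphi}:A(\iota(I))\to A(\iota(I))$ via (i), and using Proposition \ref{nie-ont}(i) to cancel.

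There is essentially no serious obstacle here: the whole statement is a formal consequence of the non-degeneracy of $B_{\varphi}$ together with the dimension count already carried out in Corollary \ref{linFunk}, exactly as the Riesz-type representation of linear functionals yields adjoints in the finite-dimensional bilinear setting. The only thing one must be slightly careful about is bookkeeping of which slot of $B_{\varphi}$ each dual acts in, so that in (iii) the correct half of Proposition \ref{nie-ont} is invoked to conclude equality of the two maps.
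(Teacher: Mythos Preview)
Your proposal is correct and follows essentially the same approach as the paper's own proof: both define $\alpha^{\varphi}(b)$ via Corollary \ref{linFunk} applied to the functional $a\mapsto B_{\varphi}(\alpha(a),b)$, treat (ii) symmetrically, and obtain (iii) from uniqueness. You simply spell out more explicitly the linearity argument and the cancellation step in (iii) via Proposition \ref{nie-ont}, which the paper leaves implicit.
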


\begin{proof}
For $b\in A(\iota(I))$, define $f_{b}\in A(I)^{\ast}$ by 
$f_{b}(a):=B_{\varphi}(\alpha(a),b)$. By Corollary \ref{linFunk}
there is a unique element of
$A(\iota(I))$ which we denote by $\alpha^{\varphi}(b)$, such that
$f_{b}=B_{\varphi}(\cdot,\alpha^{\varphi}(b))$ proving (i). 
Part (ii) follows similarly. Part (iii) follows from uniqueness.
\end{proof}

\begin{corollary}
	When fermionic standard quantum detailed balance holds for
	$(A(I),\rho_{I},\tau)$ as in (\ref{ffbKont}), then the map 
	$\tau_{t}^{\iota}$ (for any $t$) is the unique map 	satisfying 
	(\ref{ffbKont}) for all $a\in A(I)$ and $b\in A(\iota(I))$. 
	Similarly for (\ref{ffb}).
\end{corollary}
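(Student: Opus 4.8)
The plan is to read the final corollary as an immediate consequence of the uniqueness of fermionic duals established in the preceding theorem, together with the observation that condition (\ref{ffbKont}) is exactly the statement that $\tau_t^\iota$ is the fermionic dual of $\tau_t$. So the first step is to rewrite (\ref{ffbKont}), namely $\varphi(a\tau_t^\iota(b))=\varphi(\tau_t(a)b)$, in terms of the bilinear form $B_\varphi$. Using the definition $B_\varphi(a,b)=\varphi(ab)$, the right-hand side is $B_\varphi(\tau_t(a),b)$. For the left-hand side one has to be slightly careful, since $B_\varphi$ was defined via $\varphi(ab)$ rather than $\varphi(ba)$: I would check that $\varphi(a\tau_t^\iota(b))=B_\varphi(a,\tau_t^\iota(b))$ directly from the definition, i.e. that the order in the product $a\,\tau_t^\iota(b)$ already matches the convention in $B_\varphi$ (with $a\in A(I)$ in the first slot and $\tau_t^\iota(b)\in A(\iota(I))$ in the second). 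Thus (\ref{ffbKont}) says precisely $B_\varphi(\tau_t(a),b)=B_\varphi(a,\tau_t^\iota(b))$ for all $a\in A(I)$, $b\in A(\iota(I))$.

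The second step is to invoke part (i) of the theorem, applied with $\alpha=\tau_t$: there is a \emph{unique} linear map $\tau_t^\varphi:A(\iota(I))\to A(\iota(I))$ with $B_\varphi(\tau_t(a),b)=B_\varphi(a,\tau_t^\varphi(b))$ for all $a,b$. Comparing this with the reformulation of (\ref{ffbKont}) from the first step, $\tau_t^\iota$ satisfies exactly the defining relation of $\tau_t^\varphi$, and hence by uniqueness $\tau_t^\iota=\tau_t^\varphi$. Consequently $\tau_t^\iota$ is characterized by (\ref{ffbKont}): if $\tilde\tau$ is any map (linear or not a priori, but the relation forces linearity via non-degeneracy of $B_\varphi$ in Proposition \ref{nie-ont}) with $\varphi(a\tilde\tau(b))=\varphi(\tau_t(a)b)$ for all $a,b$, then $B_\varphi(a,\tilde\tau(b))=B_\varphi(a,\tau_t^\iota(b))$ for all $a$, so by Proposition \ref{nie-ont}(i) (with the hypothesis $p_M>0$ for all $M\in D_I$, which is the standing assumption here, as flagged in the earlier remark) we get $\tilde\tau(b)=\tau_t^\iota(b)$ for every $b$. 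The same argument run for each fixed $t$ gives the statement for all $t$. The case of (\ref{ffb}) is literally the $t$-independent version: take $\alpha=\tau$ in the theorem and repeat verbatim.

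The one point deserving genuine care — what I would flag as the main obstacle, though it is more bookkeeping than a real difficulty — is matching the operator-ordering convention: the detailed balance relation (\ref{ffbKont}) contains the product $a\,\tau_t^\iota(b)$ with $a$ on the left, whereas in the definition of $B_\varphi(a,b)=\varphi(ab)$ the first argument is also on the left, so the conventions align and no anticommutation sign corrections intrude; but one should say this explicitly rather than leave it implicit, precisely because the paper itself noted that $B_\varphi(a,b)=\varphi(ba)$ would have been an equally valid choice. If instead one wished to phrase things with $\tau_t^\iota(b)$ on the left, one would pick up signs from $\{a_k,a_l\}=\{a_k^\ast,a_l\}=\cdots=0$ for $k\in I$, $l\in\iota(I)$, and would need the fermionic dual with the other convention; since the paper fixes the convention in (\ref{ffbKont}) to match $B_\varphi$, this is a non-issue, but it is worth a sentence. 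With that settled, the corollary is a two-line consequence of the theorem and Proposition \ref{nie-ont}.
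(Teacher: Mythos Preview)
Your proposal is correct and matches the paper's intended argument: the corollary is stated without proof precisely because it is an immediate consequence of the uniqueness in part (i) of the preceding theorem, once one observes that (\ref{ffbKont}) reads $B_\varphi(\tau_t(a),b)=B_\varphi(a,\tau_t^\iota(b))$. Your explicit check that the operator ordering in (\ref{ffbKont}) matches the convention $B_\varphi(a,b)=\varphi(ab)$, and your remark that the standing assumption $p_M>0$ is needed via Proposition~\ref{nie-ont}, are both appropriate clarifications that the paper leaves implicit.
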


Lastly we show by example that $B_{\varphi}$ does not satisfy a positivity
property of a form analogous to that satisfied by $B_{\omega}$ in (\ref{pos}).
This is an indication that the duality in the fermionic setup may not be as
useful as the duality obtained in the usual case in terms of $B_{\omega}$.

\begin{example}
Let $a:=(1+\kappa c)^{\ast}(1+\kappa c)\in A(I)$ and 
$b:=(1+\lambda d)^{\ast}(1+\lambda d)\in A(\iota(I))$, where $c=a_l$ 
for some $l\in I$, $d=a_{\iota(l)}$ and $\kappa,\lambda\in\mathbb{C}$. Then
\[
ab=q+(\kappa c+\bar{\kappa}c^{\ast})b
\]
where (due to the anti-commutation relations)
\[
q:=(1+\lambda d)^{\ast}(1+|\kappa|^{2}c^{\ast}c)(1+\lambda d)\geq0.
\]
Now, $\varphi(q)\geq0$ and one can show from (\ref{diag}) and (\ref{fi}) that
\[
r:=\varphi((\kappa c+\bar{\kappa}c^{\ast})b)=
\kappa\lambda\varphi(cd)+\bar{\kappa}\bar{\lambda}\varphi(c^{\ast}d^{\ast}).
\]
The complex conjugate of $\varphi(cd)$ is $\overline{\varphi(cd)}=\varphi((cd)^{\ast})=-\varphi(c^{\ast}d^{\ast})$, 
since $\{c^{\ast},d^{\ast}\}=0$. From the form of $\Phi$ in (\ref{diag}), one can see that $\varphi(cd)$ is real. So
$\varphi(cd)=-\varphi(c^{\ast}d^{\ast})$. 
Therefore
\[
r=(\kappa\lambda-\bar{\kappa}\bar{\lambda})\varphi(cd).
\]
It is simple to check that $\varphi(cd)$ can be non-zero for suitable 
choices of the probabilities $p_{M}>0$ in (\ref{diag}). Hence 
$r\notin\mathbb{R}$ is possible. Then
\[
B_{\varphi}(a,b)=\varphi(ab)\notin\mathbb{R}
\]
despite the fact that $a\geq0$ and $b\geq0$.
\end{example}

\section{Questions}\label{Vrae}


There are some natural further questions that could be explored:

Can the fermionic standard quantum detailed balance condition lead to more
refined results than the usual quantum detailed balance conditions when
applied to fermionic systems? 

What other interesting examples, aside from the one in Section \ref{AfdVb},
are there of fermionic standard quantum detailed balance?

Can duals of maps in the fermionic case be approached in a different way from
Section \ref{AfdDuaal}, to have better positivity properties?

What other forms of quantum detailed balance, aside from standard quantum
detailed balance with respect to a reversing operation, can similarly be
tailored to the fermionic case?

In this paper we have essentially just considered finite systems in a very
concrete set-up. What about more general fermionic systems and a more 
abstract set-up? At least an infinite version (i.e. infinite $I$) 
should be possible in the same concrete setting that we have used in this 
paper.

Lastly, can a bosonic version of detailed balance be developed in an analogous
way? This seems plausible, but may be technically more demanding.

\section*{Acknowledgements}

I thank Vito Crismale and Francesco Fidaleo for stimulating discussions at the
beginning of this project.

\end{document}